\documentclass[10pt,reqno,fleqn]{amsart}
\usepackage[utf8]{inputenc}		%allows even quite old versions of pdflatex to deal with special symbols such as e.g. ä,ü etc.
\usepackage[english]{babel}		%language for automatically created text, e.g. dates or 'contents' vs 'inhaltsverzeichnis'

\usepackage{amsmath}				%more options and improved formatting for mathematical formulas
\usepackage{mathtools}			%extension of amsmath, e.g. \coloneqq, improved equation numbering
\usepackage{amsthm}				%enhances \newtheorem command, recognizes theoremstyle specifications
\usepackage{amssymb}				%commands for a range of symbols e.g. \nrightarrow, \checkmark
\usepackage[nointlimits]{esint}	%LOAD AFTER amslatex(=latex-amsmath & amscls). more integral symbols,e.g. \iiint (triple integral)
\usepackage{mathrsfs} 			%provides special 'formal' script letters via \mathscr{}

\usepackage{graphicx}			%\includegraphics, rotate and scale text
\usepackage{epsfig}				%‘wrapper’ around the graphicx package, can use \psfig{file=xxx,...} for \includegraphics[...]{xxx}
\usepackage{tikz}				%create graphic elements
\usepackage{xxcolor}				%extends xcolor, color, additionally includes environ ment for 'washing out'/'dimming'
\usepackage{wrapfig}
\usepackage{caption}

\usepackage{tabularx}

\usepackage{enumitem}			%allows customization of list environments
\usepackage{ifdraft} 			%for delimdelim (detects whether 'final' or 'draft' chosen for main document -> faster compiling options if draft)

\usepackage[]{todonotes}			%use \todo{} to generate comment boxes
\usepackage[linktocpage=true,colorlinks=true,linkcolor=blue,citecolor=green]{hyperref}
\usepackage[
	backend=biber,style=alphabetic, natbib=false, mcite=false, casechanger=auto,
	sorting=nyt, sortcites=false, pluralothers=true, maxnames=4, minnames=1,
	backref=false, arxiv=abs, 
	date=year,
	isbn=false, url=false, doi=true, eprint=true, related=false,
	giveninits=true
]{biblatex}
\renewbibmacro{in:}{}	%remove the "In:" usually genereated by biblatex in references

\usepackage[hmarginratio=1:1,top=32mm,columnsep=20pt]{geometry} 

\usepackage[normalem]{ulem}

\numberwithin{equation}{section}				%equations are numbered within e.g. 'section', 'subsection'

\theoremstyle{plain}							%STYLE default: italic text, extra space above and below
\newtheorem{theorem}{Theorem}[section]		%[ ]=level on which the counter "theorem" is numbered
\newtheorem*{theorem*}{Theorem}				%*=not number/not counted
\newtheorem*{maintheorem*}{Main Result}
\newtheorem{proposition}[theorem]{Proposition}
\newtheorem{corollary}[theorem]{Corollary}

\theoremstyle{definition}					%STYLE upright text, extra space above and below
\newtheorem{definition}[theorem]{Definition}	
\newtheorem{example}[theorem]{Example}

\theoremstyle{remark}						%STYLE upright text, no extra space above or below
\newtheorem{remark}[theorem]{Remark}

\theoremstyle{definition}					%theoremstyle for placeholderthm
\providecommand{\customtheoremname}{}			%\providecommand = \newcommand but ignored if command already exists
\newtheorem{placeholderthhm}{\customtheoremname}
\newcommand{\newcustomtheorem}[2]{%
  \newenvironment{#1}[1]						%Creates new environment
  {											%This new environment starts with: 						
   \renewcommand\customtheroremname{#2}%		%change the title of placeholder theorem environment
   \renewcommand\theplaceholderthhm{##1}	%set placeholderthhm-counter to argument given when calling this environment
   \placeholderthhm							%\begin{placeholdertheorem}
  }											
  {\endplaceholderthhm}						%This new environment ends with: \end{placeholdertheorem}
}
\newcustomtheorem{customass}{Assumption}		%e.g. \begin{customass}{BBB}: Assumption BBB (not 'Assumption customass-counter')
\newcustomtheorem{customdef}{Definition}

\DeclarePairedDelimiter{\bpair}{[}{]} 			% [] 	square brackets
\DeclarePairedDelimiter{\ppair}{(}{)} 			% ()		parenthesis or round
\DeclarePairedDelimiter{\Bpair}{\{}{\}} 			% {}		braces or curly
\DeclarePairedDelimiterX{\cond}[2]{[}{]}{#1  \,\delimsize\vert\,  \mathopen{}#2}

\newcommand\mathvshift[2]{\raisebox{#1}{$\displaystyle{#2}$}}
\newcommand\strutb{\vphantom{\bigm|}}

\newcommand\bracketinner[2][0pt]{\mathchoice{\mathvshift{#1}{\strutb\mathopen{}#2}}{#2}{#2}{#2}}
\ifoptiondraft{
	\newcommand\pp[2][0pt]{\ppair*{#2}} 		% (
	\newcommand\bp[2][0pt]{\bpair*{#2}} 		% [
	\newcommand\Bp[2][0pt]{\Bpair*{#2}} 		% {
}{
	\newcommand\pp[2][0pt]{\ppair*{\bracketinner[#1]{#2}}} 	% (	pp
	\newcommand\bp[2][0pt]{\bpair*{\bracketinner[#1]{#2}}} 	% [ bp
	\newcommand\Bp[2][0pt]{\Bpair*{\bracketinner[#1]{#2}}}		% { Bp
}
\newcommand\delimstyle[2]{\setlength{\delimitershortfall}{#1}\delimiterfactor=#2}
\delimstyle{5pt}{850}% okay with lmodern

\newcommand{\wt}{\widetilde}
\newcommand{\data}{\mathrm{data}}

\newcommand{\N}{\mathbb{N}}
\newcommand{\Nz}{\mathbb{N}_0}

\newcommand{\R}{\mathbb{R}}

\newcommand{\Ev}{\mathbb{E}}												%might prefer mathsf
\newcommand{\EV}[2][_]{ {\if_#1 \Ev\bp{#2} \else \Ev\cond*{#2}{#1} \fi} }	%\Ev with size adjusted square brackets, optional argument for conditional
\newcommand{\Pm}{{\mathbb{P}}}											%might prefer mathsf
\newcommand{\Qm}{{\mathbb{Q}}}											    %alternative probability measure
\newcommand{\QM}[2][_]{ {\if_#1 \Qm\bp{#2} \else \Pm\cond*{#2}{#1} \fi} }	%\Qm with size adjusted round brackets, optional with condition
\newcommand{\F}{\mathcal{F}}												%sign for sigma-algebra
\newcommand{\pFO}[1][]{ q^{\mathrm{FO}}_{#1} }      %first order termination probability
\newcommand{\pSO}[1][]{ q^{\mathrm{SO}}_{#1} }      %second order termination probability
\newcommand{\KFO}[1][]{ { K^{\mathrm{FO}}_{#1} } }      %first order Kopfschäden
\newcommand{\KSO}[1][]{ { K^{\mathrm{SO}}_{#1} } }      %second order Kopfschäden
\newcommand{\cFO}{ { c^{\mathrm{FO}}_{\mathrm{fixed}} } }      %first order fixed costs
\newcommand{\cSO}{ { c^{\mathrm{SO}}_{\mathrm{fixed}} } }      %second order fixed costs
\newcommand{\rcalc}{ r_{\mathrm{calc}} }              %technical interest rate assumption
\newcommand{\marge}{\alpha}                           %profit margin

\newcommand{\pregross}[2][]{ P^{\mathrm{gross}}_{#2}\pp{#1}}   %gross premium for start age #2 at time #1
\newcommand{\prenet}[2][]{ P^{\mathrm{net}}_{#2}\pp{#1}}   %net premium for start age #2 at time #1
\newcommand{\CFLOW}[1][]{ \mathrm{CF}\pp{#1}}   %net premium for start age #2 at time #1

\newcommand{\aee}[1][]{ { \ddot{a}_{#1} } }
\newcommand{\Aaa}[1][]{ { A_{#1} } }
\newcommand{\RS}[2]{ { {}_{#1}\!V_{#2} } }              %technical provisions for a contract after time #1 for start age #2 

\newcommand{\Cnet}[2]{ { C^{\mathrm{net}}_{#1,#2} } }
\newcommand{\Cgross}[2]{ { C^{\mathrm{gross}}_{#1,#2} } }
\newcommand{\CRS}[2]{ { C^{\RS{}{}}_{#1,#2} } }
\newcommand{\Cfixed}[1]{ { C^{\mathrm{cost}}_{#1} } }

\newcommand{\prenetr}[2][]{ \wt{P}^{\mathrm{net}}_{#2}\pp{#1}}   %net premium for start age #2 at time #1 with real technical interest rate assumption
\newcommand{\RSr}[2]{ { {}_{#1}\!\wt{V}_{#2} } }              %technical provisions for a contract after time #1 for start age #2 with real technical interest rate assumption

\newcommand{\Iinf}[1][]{ { I_{#1} } }                   %inflation index
\newcommand{\Imed}[1][]{ { I^{\mathrm{med}}_{#1} } }    %medical inflation index, #1 = time
\newcommand{\Icost}[1][]{ { I^{\mathrm{cost}}_{#1} } }  %cost inflation index, #1 = time

\newcommand{\BN}[1][]{ { B^{\mathrm{N}}_{#1} } }                   %nominal money market account
\newcommand{\BR}[1][]{ { B^{\mathrm{R}}_{#1} } }                   %real money market account
\newcommand{\FN}[1][]{ { F^{\mathrm{N}}_{#1} } }                   %nominal forward rate
\newcommand{\FR}[1][]{ { F^{\mathrm{R}}_{#1} } }                   %real forward rate
\newcommand{\PN}{ { P^{\mathrm{N}} } }                   %nominal zero coupon prices
\newcommand{\PR}{ { P^{\mathrm{R}} } }                  %real zero coupon prices
\newcommand{\PRmed}{ { P^{\mathrm{R,med}} } }                  %real zero coupon prices
\newcommand{\PRcost}{ { P^{\mathrm{R,cost}} } }                  %real zero coupon prices

\newcommand{\cN}[1][]{ { \lambda^{\mathrm{N}}_{#1} } } 
\newcommand{\cR}[1][]{ { \lambda^{\mathrm{R}}_{#1} } }

\counterwithin{figure}{section}

\begin{document}

% insert [Short Title]{Title}
\title[Market-consistent valuation in health insurance]{On the market-consistent valuation of health insurance liabilities}

% list of authors
\author{Simon Hochgerner}
\address{Financial Market Authority Austria (FMA), Otto-Wagner-Platz 5, 1090 Vienna, Austria}
\email{simon.hochgerner*at*fma.gv.at}

\author{Jonas Ingmanns}
\address{Institute of Science and Technology Austria (ISTA), Am Campus 1, 3400 Klosterneuburg, Austria}
\email{jonas.ingmanns*at*ista.ac.at}

\author{Nicole Kastanek}
\address{Financial Market Authority Austria (FMA), Otto-Wagner-Platz 5, 1090 Vienna, Austria}
\email{nicole.kastanek*at*fma.gv.at}

\begin{abstract}
We are concerned with the market-consistent valuation of lifelong health insurance products, which are subject to adjustments derived from the actuarial equivalence principle and driven by (medical) inflation. Such products are well-established in the European national markets, and the dynamics of the adjustment mechanism is well-understood from an actuarial perspective. However, the question of market-consistent valuation (as is necessary for Solvency II reporting) has not previously been addressed. This gap has led to a situation where some practitioners use stochastic models while others rely on deterministic methods to assign market-consistent values (Best Estimates) to the same type of health insurance liabilities. 

The purpose of this note is to fill this gap by showing that the Best Estimate of a lifelong health insurance policy depends on the choice of model for the interest and inflation rates. That is, the Best Estimate is not uniquely determined by the currently prevailing term structures of nominal and real spot rates, whence a deterministic calculation is theoretically unjustified.

Furthermore, we construct a valuation portfolio such that the Best Estimate valuation decouples into  calculations   of
    1.) deterministic coefficients derived from policy data and 
    2.) the prices of basis financial instruments that are independent of the individual policy data.
Using this decomposition, the policies do not have to be tracked individually along each generated stochastic path. This allows for a more efficient evaluation of the Best Estimate for a large stock of policies with a stochastic model.
\end{abstract}

\thanks{\emph{Disclaimer.} 
The opinions expressed in this article are those of the authors and do not necessarily reflect the official position of the Austrian Financial Market Authority (FMA) }

\maketitle
{\small{\bfseries \href{https://mathscinet.ams.org/mathscinet/msc/msc2020.html}{MSC2020 subject classification}:} 91G30, 91G05}

{\small{\bfseries Keywords and phrases:} Health insurance similar to life techniques, Best Estimates, inflation, Solvency II.}

%%%%%%%%%%%%%%%%%%%%%%%%%%%%%%%%%%%%%%%%%%%%%%%%
%% Table of Contents							%%
%%%%%%%%%%%%%%%%%%%%%%%%%%%%%%%%%%%%%%%%%%%%%%%%
\setcounter{tocdepth}{1}		%no subsections in table of contents
\tableofcontents
\setcounter{tocdepth}{3}		%want subsubsections in pdf navigation menu

%%%%%%%%%%%%%%%%%%%%%%%%%%%%%%%%%%%%%%%%%%%%%%%%
%% Main text entry area						%%
%%%%%%%%%%%%%%%%%%%%%%%%%%%%%%%%%%%%%%%%%%%%%%%%

\section{Introduction}

Under Solvency II, the regulatory insurance framework in the EU since 2016, insurance undertakings are required to carry out a market-consistent valuation of assets and liabilities (full balance sheet approach). This article focuses on liabilities which are associated to health insurance products which have a similar-to-life risk profile (\emph{similar to life techniques (SLT)} in the Solvency~II terminology). The valuation of these products  involves very long-term cash flow projections. Indeed, since SLT health insurance contracts  are lifelong, the time to run-off of such a liability portfolio is  markedly longer than that of a typical life insurance portfolio. 

The health insurance cash-flows are premiums, benefits and expenses. Premiums are calculated according to general actuarial principles, and are a-priori assumed to be constant throughout the contract duration for pricing purposes. 
However, given the long-term nature of health insurance, a specific feature of these contracts is that, in many jurisdictions, premiums can be adjusted to reflect trends in medical inflation, expense inflation, or also the frequency in benefit payout (as affected by morbidity, mortality and surrender). 
The reason for this adjustment mechanism is to ensure that benefits remain at the same relative level, without exposing undertakings to unmanageable inflation risk. 
However, premium adjustments may by significantly higher than the underlying inflation rate due to the underlying actuarial principles. 

The way this adjustment mechanism acts on premiums and reserves (along a given inflation scenario) is well-understood, and a comprehensive review of such products in the Belgian market is presented in \cite{HCDDT19}. 
In \cite{DDHLT17}, the authors study adjustment mechanisms based on variants of the actuarial equivalence principle, and \cite{DGADH17} is concerned with definitions of (hypothetical) surrender values for lifelong health products. 
However, the question of market consistent valuation of such products has, to the best of our knowledge, not been previously addressed in a systematic manner. In this context we note that, while the surrender value definitions suggested in \cite{DGADH17} are formulated in terms of accumulated reserves or previously paid premiums, it is often argued that surrender values should also contain an element of market-consistent valuation in order to make the transaction fair (e.g., via minimum guaranteed surrender values as discussed in \cite{MB16}). 

From a regulatory point of view, the problem of market-consistent valuation arises from the Solvency~II principle of an \emph{arm`s length transaction}, and the specific question of stochastic modelling is raised in Guideline~53A on the Valuation of Technical Provisions \cite{EIOPA22}:   
\begin{quote}
\emph{Insurance and reinsurance undertakings should use stochastic modelling for the valuation of technical provisions of contracts whose cash flows depend on future events and developments, in particular those with material options and guarantees.
}
\end{quote}
To ensure market consistency, the stochastic models are calibrated to current market data.
The expected value of the discounted cash-flows along scenarios sampled from the economic model is called the \emph{Best Estimate} under Solvency~II, and the market consistent value of the liability portfolio is obtained by adding a \emph{Risk Margin} to the Best Estimate. 

In contrast to traditional life insurance, most health insurance products do not have a profit sharing feature, i.e., discretionary benefits that depend on the undertaking's return on investment. 
For this reason and due to the adjustment mechanism, which mitigates the risk undertakings are exposed to, health insurance (without profit participation) is oftentimes viewed as a relatively mundane business.
In particular, insurance undertakings often rely on a purely deterministic approach for the market-consistent valuation, arguing with practical independence from economic developments. 

In the present article, we question this approach. To analyze the relevant effects systematically, we restrict ourselves to the simplest building block of health insurance, namely we assume that the premiums at each point in time are fully determined by the \emph{actuarial equivalence principle} where all adjusted trends are reflected in the claims inflation. 
That is, the actuarial present value of future inflation adjusted claims shall equal the sum of current reserves and the actuarial present value of future premiums, for all contracts and all times. 
This set-up is a simplification over the reality which typically would also involve other forms of trend adjustments and limitations of premium adjustments, such that premiums would not increase beyond a given threshold derived from management rules or regulatory requirements.
 
The cash flow associated to a health insurance policy (without profit participation and without limitation rules) does not explicitly depend on any nominal interest rate scenario, but it is naturally tied to inflation: 
the insurance payouts correspond to medical treatments and medical services.
If each year the cash flow was adjusted exactly according to the inflation rate, still no stochastic model would be necessary because the Best Estimate would be uniquely determined with respect to the current real interest rate curve.
While this is the case for the adjustment of the benefits, the adjustment mechanism for the premiums is more complicated:
these are adjusted according to the actuarial equivalence principle, which depending on the size of the technical provisions can lead to significantly larger adjustments compared to the original inflation rate, see Section~\ref{s_toy_ex} below for a toy model.
 
As our main results, we provide the following insights with regard to Guideline 53A on Valuation of Technical Provisions (loc.\ cit.): 
\begin{enumerate}
    \item 
    In Section~\ref{s_ex}, we show that the Best Estimate for the cash flow associated to even the simplest form of health insurance -- as described above -- is not uniquely determined by the current (nominal and real) interest rate curves, but depends significantly on the joint evolution of nominal and real term structures. 
    This is in contrast to the prevailing folklore that  health insurance  does not necessitate stochastic modeling. 
    \item 
    In Section~\ref{s_BE}, we construct a \emph{valuation portfolio} (see e.g. \cite{WM13}) for such   health insurance policies.
\end{enumerate}
 
The valuation portfolio is introduced by means of a natural decomposition of the Best Estimate cash flows:
these can be written as the linear combination of basis financial instruments with only the deterministic coefficients depending on the individual policy data.
In order to determine the Best Estimate, it is hence not necessary to track each individual policy along the economic scenarios.
Instead, it is sufficient to apply the stochastic model only for the pricing of the basis financial instruments.
This decomposition has a range of advantages:
\begin{enumerate}[label = (\alph*)]
    \item   \textbf{Efficient Monte-Carlo simulations. }
            Rather than computing the premiums and payouts for each policy along each sampled scenario, it is sufficient to compute the coefficients once for each policy and only a fixed number of basis financial instruments has to be evaluated along the scenarios.
    \item   \textbf{Identification of the most sensitive time-periods.}
            The time-periods in which the Best Estimate is most sensitive to interest rate changes are apparent from the coefficients.
    \item   \textbf{Comparability of portfolios and stochastic models.} 
            Different policy portfolios can be easily compared via the summed up coefficients, which are independent of the chosen stochastic model.
            The impact of different stochastic models can be easily measured in terms of the Best Estimates for the basis financial instruments.
\end{enumerate}
Note that the possibility for efficient revaluation of liabilities under different interest rate scenarios is also a prerequisite for an efficient Asset Liability Management (ALM) process.  

The valuation portfolio decomposition holds only under the above assumption that inflation adjustments are not capped, i.e.\ the common industry practice of limitations cannot be captured by this approach.  
In reality, this cap is given by an annual management decision with the implicit goal  of preventing adverse selection processes within the pool of policy holders. 
For modeling purposes, varying limitation rules can be implemented.
The most basic option is to cap the adjustments at a multiple of the inflation rate, but already such a simple rule breaks the linearity and hence the decomposition. 

Finally, we note that this article is about the necessities in relation to the correct and efficient modeling of health insurance cash flows with regard to Best Estimate valuation, it is not about the financial modeling of (nominal or real) yield curves. For the latter we refer to, e.g., \cite{BriMe06}. 

\subsection{Related literature}
In the present paper, our point of departure is, similarly to \cite{DDHLT17}, the actuarial equivalence principle, but we investigate ramifications of the ensuing adjustment mechanism for the Best Estimate calculation rather than studying the adjustment mechanism along a fixed scenario. 

For an overview of the practical process and components used for a market consistent valuation of insurance contracts see \cite{CIA24}.
For an introduction to health insurance similar to life techniques in the German and Austrian style, see \cite{bohn80}.
A major focus has been on properly modeling, estimating and forecasting the underlying quantities such as health claims and surrender rates. A work in this direction which focuses on the German market is \cite{CDLS18}.
In \cite{Hel07, Ch12}, semi-Markov models are used to capture the dependence of surrender rates on the run-time of the policy.
In \cite{CDLS18, Pio20, Pio25}, statistical models to predict the short term development of average health claims are proposed. 
For the US-market, \cite{DGZ18} discusses to which extent different inflation indices are suitable for the adjustment with respect to the adjustment of health claims, and \cite{DH19} contains a similar study for the Belgian market. 
In \cite{Jet18}, a martingale decomposition of the health insurance liabilities is derived with respect to risk-factors using some simplifying assumptions.

\subsection{Structure of the article}
In Section \ref{s_toy_ex}, we introduce a toy model to illustrate the basic structure of cash flows associated to health insurance policies.
In Section \ref{s_model}, we discuss a framework for arbitrage-free inflation modeling based on the foreign-currency analogy between `real' and `nominal' valuations.
In Section \ref{s_ex}, we show that the Best Estimate for the cash flow of the toy model can vary greatly for models chosen within the framework from Section \ref{s_model}. In particular, the Best Estimate is not uniquely determined by the current interest rate curves.
In Section \ref{s_BE}, we construct the basic cashflow of a general health insurance policy and show that it can be written as a linear combination of basis financial instruments with only the coefficients depending on the individual policy data.
In Section \ref{s_con}, we discuss the results and make some related comments.

\section{A toy model for the structure of cash flows in health insurance}\label{s_toy_ex}
In health insurance, an a-priori constant, life-long annual premium is paid by the policyholder who in turn claims variable payouts. 
The premium is determined via the equivalence principle: At each date, the sum of technical provisions and the present value of the premium cash flow must correspond to the present value of the expected payouts with respect to a technical interest rate assumption and the assumed termination probabilities (surrender and death).

As a toy model, consider a policy which is guaranteed to run for exactly three discrete dates with a technical interest rate assumption $\rcalc=0$ and no expected payouts until the final date, say
\begin{equation*}
    K_x = \pp{0,0,30}.
\end{equation*}
According to the equivalence principle and assuming no inflation, the annual premium $P(0)$ determined when closing the policy hence has to satisfy
\begin{equation*}
    3P(0) = 30.
\end{equation*}
However, due to the nature of health insurance, the actual payout is not a nominal value but corresponds to a basket of services and goods and the actual health of the policy holders.
In a modeling framework, this is represented by tying the payouts to an inflation index $(\Iinf[t])_{t=0,1,2}$ with $\Iinf[0]=1$.
Throughout the runtime of the policy, the premiums may be adjusted with the evolution of the inflation index -- based on the equivalence principle.
In practice, there are other factors which can lead to premium adjustments and which can be included in stochastic models, but in the present work we focus only on inflation.

After the first date, the remaining expected payout based on the current observation of the inflation index is given by $(0,30\Iinf[1])$.
Taking into consideration the technical provision derived from the first premium payment, based on the equivalence principle the new annual premium at $t=1$ has to satisfy
\begin{align*}
    \RS{1}{} + 2P(1)  &= 30\Iinf[1]
    &&\text{with}
    &   \RS{1}{} &= P(0).
\end{align*}
At the final date, the expected payout is updated to $30\Iinf[2]$ and, with $P(2)$ added to the technical provision, the final premium $P(2)$ has to satisfy
\begin{align*}
    \RS{2}{} + P(2) &= 30\Iinf[2]
    &&\text{with}
    &   \RS{2}{} &=  P(1) + P(0).
\end{align*}
In summary, the premium cash flow associated to our toy model is given by
\begin{equation}\label{intro_eq_mock-cf}
    P = \pp{10 ,\, 15\Iinf[1]-5 ,\, 30 \Iinf[2] - 15\Iinf[1] - 5}.
\end{equation}
Note that the premium adjustments are larger than the actual inflation rate because, on top of the basic inflation adjustment, they compensates for the relative devaluation or revaluation of the technical provision.

At first glance, it might seem possible to decompose the cash flow into just nominal claims and real, inflation-adjusted claims.
However, note that the formula for $P(2)$ depends not just on the inflation index at time $t=2$, but also at time $t=1$.
As we will see in Section \ref{s_model} and Section \ref{s_ex}, this mismatch is the reason why it is non-trivial to determine the Best Estimate of health insurance policies.

\section{Inflation models based on the foreign-currency analogy}\label{s_model}
In this section, we discuss a framework for arbitrage-free inflation modeling.
We follow the foreign-currency analogy from \cite[Section 2.9 and Chapter 15]{BriMe06} with real values treated as a separate `currency'.
That is, given a fixed set of nominal forward rates $(\FN[t])_{t\in\Nz}$ and real forward rates $(\FR[t])_{t\in\Nz}$, the respective money market accounts $(\BN[t])_{t\in\Nz}$ and $(\BR[t])_{t\in\Nz}$ are given by $\BN[0]=\BR[0]=1$ and
\begin{align*}
        \BN[t]  &\coloneqq  \prod_{s=0}^{t-1} \pp{1+\FN[s]},
    &   \BR[t]  &\coloneqq  \prod_{s=0}^{t-1} \pp{1+\FR[s]} 
\end{align*}
for $t\in\N$. The corresponding inflation index $(\Iinf[t])_{t\in\Nz}$ is defined as the exchange rate
\begin{equation*}
    \Iinf[t]\coloneqq \frac{\BN[t]}{\BR[t]}.
\end{equation*}
In order to obtain an inflation model and a corresponding arbitrage-free pricing tool, one selects stochastic processes $(\BN[t])_{t\in\Nz}, (\BR[t])_{t\in\Nz}\subset\R_{>0}$ with respect to a probability space $(\Omega,\F,\Qm)$ that are adapted to a filtration $(\F_{t-1})_{t\in\Nz}$ with $\F_t\subset\F$ for all $t\in\Nz$ and $\F_{-1}=\F_0=\Bp{\emptyset, \Omega}$.
These processes need to be chosen such that there is no arbitrage for a given economy of securities and their current market prices.
This economy should at least include 
\begin{align*}
        &(\BN[s])_{s\in\Nz} \,\,\pp{\, =\,\, (\Iinf[s]\BR[s])_{s\in\Nz}}, 
    &   &(\PN(s,t))_{s\leq t} \,\,\,\text{and}\,\,\,
        (\Iinf[s]\PR(s,t))_{s\leq t} \quad\text{for all $t\in\Nz$},
\end{align*}
where $\PN(s,t)$ and $\PR(s,t)$ for $s\leq t$ are the respective prices for zero-coupon bonds in each `currency', that is,
\begin{itemize}
    \item   $\PN(s,t)$ is the nominal price at time $s$ for a contract that guarantees its holder one unit of the nominal currency at time $t$, in particular $\PN(t,t)=1$,
    \item   $\PR(s,t)$ is the real price at time $s$ for a contract that guarantees its holder one unit of the real currency at time $t$, in particular $\PR(t,t)=1$.
\end{itemize}
When there is no arbitrage, there exists an equivalent martingale measure for any numeraire, see e.g. \cite[Section 2.1]{BriMe06}.
Without loss of generality, let $\Qm$ be an equivalent martingale measure for the numeraire $(\BN[t])_{t\in\Nz}$.
In particular, $\Qm$ has to satisfy
\begin{align}
    \PN(0,t) &= \EV[\F_0]{\frac{\PN(t,t)}{\BN[t]}} = \EV[\F_0]{\frac{1}{\BN[t]}} = \EV{\frac{1}{\BN[t]}}\label{model_eq_PN0}
\intertext{and}
    \PR(0,t) &= \Iinf[0]\PR(0,t) = \EV[\F_0]{\frac{\Iinf[t]\PR(t,t)}{\BN[t]}} = \EV{\frac{\Iinf[t]}{\BN[t]}} = \EV{\frac{1}{\BR[t]}}\label{model_eq_PR0}
\end{align}
where $\Ev$ is the expected value with respect to $\Qm$ and the current prices $\PN(0,t)$ and $\PR(0,t)$ at $s=0$ are available from market data.

For an arbitrary contingent claim $H\in L^2(\Omega,\F_t,\Qm)$ at time $t\in\Nz$, a no-arbitrage price $\pi$ is given by
\begin{equation*}
    \pi = \EV{\frac{H}{\BN[t]}}.
\end{equation*}
We say that $\pi$ is the Best Estimate for the claim $H$ with respect to $\Qm$.
\begin{remark}[Non-uniqueness]
The price $\pi$ is uniquely determined, if and only if, the claim is attainable with respect to the given economy of securities. 
Otherwise there exists a different equivalent martingale measure that yields a different price, see e.g. \cite{EKQ95, KK96}.
\end{remark}

\begin{example}[The toy model from Section \ref{s_toy_ex}]\label{model_ex_toy}
Assuming such an interest and inflation model has been chosen, the Best Estimate $\pi$ for the current value of the claims connected to the premium cash flow \eqref{intro_eq_mock-cf} for the toy model from Section \ref{s_toy_ex} is given by
\begin{align*}
    \pi &=  -\sum_{t=0}^2 \EV{\frac{P(t)}{\BN[t]}}\\
        &=  -10 \EV{\frac{1}{\BN[0]}} 
            -15 \EV{\frac{\Iinf[1]}{\BN[1]}} + 5 \EV{\frac{1}{\BN[1]}}
            -30 \EV{\frac{\Iinf[2]}{\BN[2]}} + 15 \EV{\frac{\Iinf[1]}{\BN[2]}} + 5 \EV{\frac{1}{\BN[2]}}\\
        &=  -10 - 15\PR(0,1) + 5\PN(0,1) -30\PR(0,2) + 15\EV{\frac{\Iinf[1]}{\BN[2]}} + 5 \PN(0,2),
\end{align*}
where we applied \eqref{model_eq_PN0} and \eqref{model_eq_PR0}.
Note that all components but one are uniquely determined by the current prices of the nominal and real zero-coupon bonds and hence independent of the chosen interest and inflation model.
However, as we will show in the next section, the value of $\EV{\frac{\Iinf[1]}{\BN[2]}}$ does depend on the chosen model and thus is not uniquely determined by just the current prices of the nominal and real zero-coupon bonds.
If this was the case, then it would be sufficient to use a deterministic interest and inflation model corresponding to the forward rates implied by the current prices for the zero-coupon bonds.
\end{example}

\section{The necessity of stochastic interest and inflation rate modeling}\label{s_ex}
In this section, we use the three-date toy model from Section \ref{s_toy_ex} to demonstrate that the best estimate of an insurance contract is not uniquely determined by the current nominal and real spot rates, given in form of the current prices for zero-coupon bonds
\begin{equation*}
    \PN(0,1),\,  \PN(0,2),\,  \PR(0,1),\,  \PR(0,2) \in \R_{>0}.
\end{equation*}
Based on the observations in Example \ref{model_ex_toy}, it is sufficient to show that $\EV{\frac{\Iinf[1]}{\BN[2]}}$ depends on the choice of possible interest rate scenarios and their weights.

\subsection{A deterministic model}
The simplest model is one deterministic scenario. 
In this case, due to \eqref{model_eq_PN0} and \eqref{model_eq_PR0}, the money market accounts have to be chosen as
\begin{align*}
        \BN &= \pp{1,\, \frac{1}{\PN(0,1)},\, \frac{1}{\PN(0,2)}},
    &   \BR &= \pp{1,\, \frac{1}{\PR(0,1)},\, \frac{1}{\PR(0,2)}}.
\end{align*}
We hence obtain
\begin{align*}
    \EV{\frac{\Iinf[1]}{\BN[2]}} = \EV{\frac{\BN[1]}{\BR[1]\BN[2]}} = \frac{\PR(0,1)}{\PN(0,1)}\PN(0,2).
\end{align*}

\subsection{A two-scenario model}
Next, we consider a model with two possible scenarios.
For brevity, we slightly deviate from the standard framework presented in Section \ref{s_model} by allowing $\BN[1]$ and $\BR[1]$ to be random, which corresponds to skipping the first term since usually $\BN[1]$ and $\BR[1]$ are determined by the current forward rates at $t=0$.

For a probability space $\Omega=\Bp{\omega_1,\omega_2}$, the money market accounts are given by 
\begin{align*}
        \BN(\omega_i) &= \pp{1,\, \frac{1}{\cN[i]\PN(0,1)},\, \frac{1}{\PN(0,2)}}     \quad\text{for $i=1,2$},\\
        \BR(\omega_i) &= \pp{1,\, \frac{1}{\cR[i]\PR(0,1)},\, \frac{1}{\cR[i]\PR(0,2)}}     \quad\text{for $i=1,2$},
\end{align*}
with $\cN[1], \cN[2], \cR[1], \cR[2] \in\R_{>0}$.
The corresponding equivalent martingale measure $\Qm$ with $\QM{\omega_i}=p_i\in(0,1)$ for $i=1,2$ is uniquely determined since \eqref{model_eq_PN0} and \eqref{model_eq_PR0} imply that 
\begin{align*}
        p_1+p_2&=1,
    &   \cN[1]p_1 + \cN[2]p_2 &= 1,
    &   \cR[1]p_1 + \cR[2]p_2 &= 1,
\end{align*}
as long as the chosen parameters allow for a solution.
Alternatively, choosing $\cN[1],\cR[1]>0$ and $p_1\in\pp{0,\min\Bp{1, (\cN[1])^{-1}, (\cR[1])^{-1}}}$, the other parameters have to satisfy
\begin{align*}
        p_2     &= 1-p_1,
   &    \cN[2]  &= \frac{1-\cN[1]p_1}{1-p_1},
   &    \cR[2]  &=  \frac{1-\cR[1]p_1}{1-p_1}.
\end{align*}
For this type of model, we obtain
\begin{align*}
    \EV{\frac{\Iinf[1]}{\BN[2]}} 
        = \EV{\frac{\BN[1]}{\BR[1]\BN[2]}}
    &   = \frac{\BN[1]}{\BR[1]\BN[2]}(\omega_1)p_1 + \frac{\BN[1]}{\BR[1]\BN[2]}(\omega_2)p_2\\
    &   =   \frac{\PR(0,1)}{\PN(0,1)}\PN(0,2) \pp{\frac{\cR[1]}{\cN[1]}p_1 + \frac{1-\cR[1]p_1}{1-\cN[1]p_1}(1-p_1)}.
\end{align*}
Varying the parameter choices yields
\begin{align}
    \pp{\frac{\cR[1]}{\cN[1]}p_1 + \frac{1-\cR[1]p_1}{1-\cN[1]p_1}(1-p_1)}  &\rightarrow +\infty    &&\text{ for $\cN[1]\rightarrow 0$},\label{ex_eq_liminfty}\\
    \pp{\frac{\cR[1]}{\cN[1]}p_1 + \frac{1-\cR[1]p_1}{1-\cN[1]p_1}(1-p_1)}  &\rightarrow 0    &&\text{ for $(\cN[1],\cR[1],p_1)\rightarrow\pp{\frac{1}{2},0,1}$}.\label{ex_eq_limzero}
\end{align}
This shows that we can obtain any given price for this cash flow component when we are free to choose the interest and inflation model.
In particular, there is no directional bias compared to the deterministic Best Estimate -- the price could be higher or lower in comparison.

Note that \eqref{ex_eq_liminfty} corresponds to the possibility of a term with extreme inflation followed by a term with negative nominal interest rates.
In contrast, \eqref{ex_eq_limzero} is harder to characterize since $(\cN[1],\cR[1],p_1)\rightarrow\pp{\frac{1}{2},0,1}$ implies $(\cN[2],\cR[2],p_2)\rightarrow\pp{+\infty,+\infty,0}$, but along the path of $\omega_1$ it corresponds to extreme deflation in the first term and no extreme behavior in the second term.

\begin{corollary}
    Let $\PN(0,1),  \PN(0,2),  \PR(0,1),  \PR(0,2) \in \R_{>0}$.
    For any price $\pi>0$ there exist stochastic processes $(\BN[t])_t,(\BR[t])_t\subset\R_{>0}$ with respect to a probability space $(\Omega,\F,\Qm)$ such that 
    \begin{align}
        \PN(0,t) &= \EV{\frac{1}{\BN[t]}},
        &
        \PR(0,t) &=  \EV{\frac{1}{\BR[t]}}\tag{\eqref{model_eq_PN0},\eqref{model_eq_PR0}}
    \end{align}
    and
    \begin{align*}
        \pi = \EV{\frac{\Iinf[1]}{\BN[2]}},
    \end{align*}
    where $\EV{\cdot}$ is the expected value with respect to $\Qm$ and the inflation index is given by $\Iinf[1]\coloneqq \frac{\BN[1]}{\BR[1]}$.
\end{corollary}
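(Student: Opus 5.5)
Since the deterministic value $\pi_0 := \frac{\PR(0,1)}{\PN(0,1)}\PN(0,2)>0$ is fixed by the data, the goal is to show that the continuous function
\begin{equation*}
    f(\cN[1],\cR[1],p_1) := \frac{\cR[1]}{\cN[1]}p_1 + \frac{1-\cR[1]p_1}{1-\cN[1]p_1}(1-p_1)
\end{equation*}
attains every value $c=\pi/\pi_0\in(0,\infty)$ on the admissible set
\begin{equation*}
    D := \Bp{(\cN[1],\cR[1],p_1) : \cN[1]>0,\ \cR[1]>0,\ 0<p_1<\min\Bp{1,(\cN[1])^{-1},(\cR[1])^{-1}}}.
\end{equation*}
On $D$ the two-scenario model of the previous subsection is well defined: with $p_2=1-p_1$, $\cN[2]=\frac{1-\cN[1]p_1}{1-p_1}>0$ and $\cR[2]=\frac{1-\cR[1]p_1}{1-p_1}>0$, all money market account entries are strictly positive. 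The model also satisfies \eqref{model_eq_PN0} and \eqref{model_eq_PR0} for $t=1,2$. For $t=2$ this is immediate, since $\BN[2]$ is deterministic and $\EV{1/\BR[2]} = \PR(0,2)(\cR[1]p_1+\cR[2]p_2)=\PR(0,2)$. For $t=1$ it follows from the linear constraints $\cN[1]p_1+\cN[2]p_2=1$ and $\cR[1]p_1+\cR[2]p_2=1$. The computation already carried out there gives $\EV{\Iinf[1]/\BN[2]}=\pi_0\, f$, so it suffices to show $f(D)\supseteq(0,\infty)$. (If one insists on $\BN[1],\BR[1]$ being deterministic, as in the standard framework, one prepends a deterministic period; I would simply follow the paper's convention of the previous subsection.)

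The plan is an intermediate value argument. First, $D$ is path-connected. Indeed, each slice $\Bp{p_1 : 0<p_1<\min\Bp{1,1/\cN[1],1/\cR[1]}}$ is a nonempty interval, and the projection onto $(\cN[1],\cR[1])\in\R_{>0}^2$ is connected, so any two points can be joined by moving $p_1$ down to a small common value, then moving $(\cN[1],\cR[1])$, then moving $p_1$ back up. Second, $f$ is continuous on $D$, because the denominators $\cN[1]$ and $1-\cN[1]p_1$ are positive there. Hence $f(D)$ is an interval.

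It remains to check that $f$ is unbounded above and has infimum $0$ on $D$, which are exactly \eqref{ex_eq_liminfty} and \eqref{ex_eq_limzero}; this is the main obstacle, and I would verify both limits explicitly with admissible points. For \eqref{ex_eq_liminfty}, fix $\cR[1]=1$ and $p_1=\tfrac12$, and let $\cN[1]\to0$. The points stay in $D$, the first term $\tfrac{p_1}{\cN[1]}\to\infty$, and the second term is nonnegative, so $f\to\infty$. For \eqref{ex_eq_limzero}, take $\cN[1]=\tfrac12$, let $\cR[1]=\varepsilon\to0$, and let $p_1=1-\delta$ with $\delta\to0$. These points are admissible since the constraint reduces to $p_1<1$. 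The first term is $2\varepsilon(1-\delta)\to0$, and the second is $\frac{1-\varepsilon(1-\delta)}{1-(1-\delta)/2}\,\delta\le 2\delta\to0$. Thus $\inf f=0$, and $0$ is never attained since $f>0$ on $D$.

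Since $f(D)$ is an interval containing values arbitrarily close to $0$ and arbitrarily large, $f(D)=(0,\infty)$. For a given $\pi>0$, choosing a point of $D$ with $f=\pi/\pi_0$ yields the required processes and measure, which proves the Corollary.
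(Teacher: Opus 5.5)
Your proposal is correct and follows the paper's route: it uses the same two-scenario model, in which the target expectation equals the deterministic value times the explicit function $f$, together with the limits \eqref{ex_eq_liminfty} and \eqref{ex_eq_limzero}. The one thing you add is to spell out the continuity and path-connectedness (intermediate value) step on the admissible parameter set, which the paper leaves implicit when it concludes that every price $\pi>0$ can be attained.
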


\section{The Best Estimate and valuation portfolio of a health insurance policy}\label{s_BE}
In this section, we introduce the structure of the basic cash flow for general health insurance policies, basic meaning without limitations of premium adjustments.
We show that for each year this cash flow can be written as a linear combination of past and present inflation index values with the deterministic coefficients depending on the individual policy data.
Thus, in order to obtain the Best Estimate, it is sufficient to determine prices for the payment corresponding to the value of an inflation index at a fixed time before the maturity. 

\subsection{The basic cashflow of a health insurance policy}\label{subs_BE_cf}
In health insurance, typically each policy is based on and associated with a range of first order parameters, that is
\begin{description}[style=multiline, leftmargin=2.5cm, labelindent=0.5cm, font = \normalfont]
    \item[{$(\KFO[x])_{x\in\Nz}$}]   health benefits, the assumed annual payout to a policyholder of age $x\in\Nz$ at current price levels,
    \item[{$(\pFO[x])_{x\in\Nz}$}]   the assumed probability that a policyholder of age $x\in\Nz$ dies or surrenders within the next year,
    \item[{$\rcalc\in\R$}]           the technical interest rate assumption,
    \item[{$\cFO\geq 0$}]            the assumed annual fixed costs per policy and
    \item[{$\marge \geq 0$}]         the risk or profit margin added to the premiums.
\end{description}
Together with the observed inflation, these quantities are used to determine
\begin{description}[style=multiline, leftmargin=3cm, labelindent=0.5cm, font = \normalfont]
    \item[{$(\pregross[t]{x})_{t\in\Nz}$}] the annual gross premiums at times $t\in\Nz$ for a policy started at time $t=0$ with age $x\in\Nz$,
\end{description}
which we will describe in detail below in Subsection \ref{subs_BE_premium}.
While the first order parameters are chosen conservatively for risk protection, their best estimates, also called second order parameters, are used in order to model the actual cashflow, that is,
\begin{align*}
    &(\KSO[x])_{x\in\Nz},   &&(\pSO[x])_{x\in\Nz},  &&\cSO\geq 0
\end{align*}
are used for health benefits, termination probability, and annual fixed costs.

Given a fixed inflation scenario encoded in terms of 
\begin{description}[style=multiline, leftmargin=2.5cm, labelindent=0.5cm, font = \normalfont]
    \item[{$(\Imed[t])_{t\in\Nz}$}]  the medical inflation index associated to health benefits, $\Imed[0]=1$, 
    \item[{$(\Icost[t])_{t\in\Nz}$}]  the general inflation index associated to fixed costs with $\Icost[0]=1$,
\end{description}
the estimated cash flow $\CFLOW[t]$ at time $t\in\Nz$ for a policy started at $t=0$ with age $x\in\Nz$ is given by 
\begin{align}\label{BE_eq_cfgross}
    \CFLOW[t] \coloneqq \pp{\pregross[t]{x} - \Imed[t]\KSO[x+t] - \Icost[t]\cSO} \prod_{s=0}^{t-1}(1-\pSO[x+s]).
\end{align}
Here, the product estimates the probability that the policy has not been terminated at time $t$.
We implicitly assume that the health benefits are uniformly impacted by medical inflation.

\subsection{Calculating the premium and inflation-based adjustments}\label{subs_BE_premium}
The gross premiums $(\pregross[t]{x})_t$ consist of two parts, the net premiums $(\prenet[t]{x})_t$ covering the health benefits and a second part covering the fixed costs.
The margin $\marge \geq 0$ is factored in on top, that is,
\begin{align}\label{BE_eq_pregross}
    \pregross[t]{x} \coloneqq \pp{\prenet[t]{x} + \Icost[t]\cFO} \frac{1}{1-\marge} 
\end{align}
for $t\in\Nz$. 

The net premium is determined according to the equivalence principle: 
When closing the policy, the premium is chosen such that the present value of an annual payment equals the present value of the health benefits with respect to the technical interest rate assumption and the assumed first order termination probabilities.
After each year, this premium is adjusted based on the observed inflation in a way that the equivalence principle stays intact, now also including the technical provisions that result from the point-wise mismatch between the net premium and the health benefits. 

To be more precise, the present value of an annual payment of one currency unit until the termination of the policy with respect to the technical interest rate assumption $\rcalc$ and the assumed first order termination probabilities $(\pFO[x])_x$ is given by
\begin{equation}\label{BE_eq_aee}
    \aee[x] \coloneqq \sum_{t\geq 0} \prod_{s=0}^{t-1}\frac{1-\pFO[x+s]}{1+\rcalc},
\end{equation}
where $x\in\Nz$ is the current age of the policyholder.
Analogously, the present value of health benefits paid until the termination of the policy and assuming no inflation is
\begin{equation}\label{BE_eq_Aaa}
    \Aaa[x] \coloneqq \sum_{t\geq 0} \pp{\prod_{s=0}^{t-1}\frac{1-\pFO[x+s]}{1+\rcalc}}\KFO[x+s].
\end{equation}
By the equivalence principle, the initial net premium $\prenet[0]{x}$ is therefore chosen as
\begin{align*}
    \prenet[0]{x} &\coloneqq \frac{\Aaa[x]}{\aee[x]} 
    &\pp{\quad\iff \qquad \aee[x]\prenet[0]{x} = \Aaa[x]}.
\end{align*}
This clearly does not imply a point-wise, year-by-year equivalence because, in general, the health benefits are not constant.
Hence, after the first year, the equivalence principle for premiums and health benefits breaks. 
The gap between the two corresponding present values is filled by the technical provision $\RS{1}{x}$, that is, 
\begin{equation*}
    \RS{1}{x} + \aee[x+1]\prenet[0]{x} = \Aaa[x+1],
\end{equation*}
where $\RS{1}{x}$ given by
\begin{equation*}
    \RS{1}{x} = \pp{\prenet[0]{x}-\KFO[x]}\frac{1+\rcalc}{1-\pFO[x]},
\end{equation*}
which is the mismatch between premiums and health benefits with interest according to the technical interest rate assumption and an increase corresponding to the redistribution of provisions from the expected terminations.

However, the health benefits as expected payouts for medical services are tied to the medical inflation index $(\Imed[t])_t$. 
In order to compensate, the net premiums are adjusted each year based on the observed inflation according to the equivalence principle (including the technical provisions) and assuming no further inflation.

\begin{definition}[The inflation adjustment of net premiums]\label{BE_def_prenet}
Let $(\KFO[x])_{x\in\Nz}$ be a health benefit profile, $(\pFO[x])_{x\in\Nz}$ the assumed termination probabilities, and $\rcalc$ the technical interest rate  assumption. 
Given the medical inflation index $(\Imed[t])_t$ and a start age $x_0\in\Nz$, the annual net premiums are given by 
\begin{multline}\label{BE_eq_prenet}
    \prenet[t]{x_0} \coloneqq \frac{\Imed[t]\Aaa[x_0+t]-\RS{t}{x_0}}{\aee[x_0+t]} \\
    \pp{\,\,\iff \quad \RS{t}{x_0} + \aee[x_0+t]\prenet[t]{x_0} = \Imed[t]\Aaa[x_0+t]}
\end{multline}
for $t\in\Nz$ with $(\aee[x])_x$ as in \eqref{BE_eq_aee}, $(\Aaa[x])_x$ as in \eqref{BE_eq_Aaa}, and the technical provisions $\pp{\RS{t}{x_0}}_t$ given by
\begin{align}
        \RS{0}{x_0}     &\coloneqq  0,\notag\\
        \RS{t+1}{x_0}    &\coloneqq  \pp{\RS{t}{x_0}+\prenet[t]{x_0}-\Imed[t]\KFO[x_0+t]}\frac{1+\rcalc}{1-\pFO[x_0+t]}
        \qquad\text{for all $t\in\Nz$.}\label{BE_eq_RS}
\end{align}
\end{definition}

\subsection{The Best Estimate of the health insurance policy cash flow}
Note that by Definition \ref{BE_def_prenet} the annual net premium for each time $t\in\Nz$ is a linear combination of the values $(\Imed[s])_{0\leq s\leq t}$ of the medical inflation index.
This allows for a separation of the stochastic Best Estimate evaluation from the individual policy data.

\begin{proposition}[Decomposition of the net premiums]\label{BE_prop_decomp-net}
Let the annual net premium and the technical provisions be as in Definition \ref{BE_def_prenet}.
Then 
\begin{align}\label{BE_eq_net-decomp}
    \prenet[t]{x} &= \sum_{s=0}^t \Cnet{t}{s} \Imed[s]
    &&\text{and}
    &   \RS{t}{x} &= \sum_{s=0}^{t-1} \CRS{t}{s} \Imed[s]
\end{align}
for all $t\in\Nz$ with the coefficients defined inductively as
\begin{align}
    \Cnet{t}{s}   &\coloneqq      \left\lbrace
                                    \begin{aligned}
                                        \frac{\Aaa[x+t]}{\aee[x+t]}
                                        &=\frac{\sum_{u\geq 0} \pp{\prod_{v=0}^{u-1}\frac{1-\pFO[x+v]}{1+\rcalc}}\KFO[x+t+u]}
                                                {\sum_{u\geq 0} \prod_{v=0}^{u-1}\frac{1-\pFO[x+v]}{1+\rcalc}}       
                                                                                                            &\quad  &\text{for $s=t$},\\
                                         - \frac{\CRS{t}{s}}{\aee[x+t]}
                                        &= - \frac{\CRS{t}{s}}{\sum_{u\geq 0} \prod_{v=0}^{u-1}\frac{1-\pFO[x+v]}{1+\rcalc}}
                                                                                                            &\quad  &\text{else for $s<t$}
                                    \end{aligned}
                                    \right.\label{BE_eq_Cnet}\\
    \CRS{t}{s}      &\coloneqq      \left\lbrace
                                    \begin{aligned}
                                        & \frac{1+\rcalc}{1-\pFO[x+t-1]}\pp{\Cnet{t-1}{t-1}-\KFO[x+t-1]}    &\quad  &\text{for $s=t-1$},\\
                                        & \frac{1+\rcalc}{1-\pFO[x+t-1]}\pp{\CRS{t-1}{s} + \Cnet{t-1}{s}}   &\quad  &\text{else for $s<t-1$}
                                    \end{aligned}
                                    \right.\label{BE_eq_CRS}
\end{align}
for $t\in\Nz$.
\end{proposition}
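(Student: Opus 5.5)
We argue by induction on $t\in\Nz$ and prove both identities in \eqref{BE_eq_net-decomp} simultaneously. The decomposition of $\prenet[t]{x}$ at time $t$ follows from that of $\RS{t}{x}$ via \eqref{BE_eq_prenet}. The decomposition of $\RS{t+1}{x}$ in turn follows from those of $\RS{t}{x}$ and $\prenet[t]{x}$ via the recursion \eqref{BE_eq_RS}. So the induction step alternates between the two recursions of Definition \ref{BE_def_prenet}. We use the convention that an empty sum vanishes, so $\RS{0}{x}=0=\sum_{s=0}^{-1}\CRS{0}{s}\Imed[s]$.

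\emph{Base case $t=0$.} Since $\RS{0}{x}=0$, \eqref{BE_eq_prenet} gives $\prenet[0]{x}=\Imed[0]\Aaa[x]/\aee[x]=\Cnet{0}{0}\Imed[0]$, which matches \eqref{BE_eq_Cnet} with $s=t=0$. Note that $\Imed[0]=1$ is not even needed, since we keep $\Imed[0]$ as a factor.

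\emph{Step for the net premium at time $t$.} Assume $\RS{t}{x}=\sum_{s=0}^{t-1}\CRS{t}{s}\Imed[s]$. Inserting this into \eqref{BE_eq_prenet} yields
\[
\prenet[t]{x}=\frac{\Aaa[x+t]}{\aee[x+t]}\Imed[t]-\sum_{s=0}^{t-1}\frac{\CRS{t}{s}}{\aee[x+t]}\Imed[s].
\]
Comparing coefficients with \eqref{BE_eq_Cnet} gives exactly $\Cnet{t}{t}$ for $s=t$ and $\Cnet{t}{s}$ for $s<t$.

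\emph{Step for the technical provision at time $t+1$.} Insert both decompositions at time $t$ into \eqref{BE_eq_RS}:
\[
\RS{t+1}{x}=\frac{1+\rcalc}{1-\pFO[x+t]}\Bigl(\sum_{s=0}^{t-1}\bigl(\CRS{t}{s}+\Cnet{t}{s}\bigr)\Imed[s]+\bigl(\Cnet{t}{t}-\KFO[x+t]\bigr)\Imed[t]\Bigr).
\]
This is precisely \eqref{BE_eq_CRS} with $t$ replaced by $t+1$: the case $s=t$ comes from the isolated term, and the case $s<t$ comes from the sum. This closes the induction.

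The only real obstacle is bookkeeping. Two points need care. First, the index shift between \eqref{BE_eq_CRS}, which is written with $t-1$, and the recursion step from $t$ to $t+1$. Second, the explicit series expressions in \eqref{BE_eq_Cnet} should be checked against \eqref{BE_eq_aee} and \eqref{BE_eq_Aaa} evaluated at age $x+t$: the survival products there should run over $\pFO[x+t+v]$, and the benefit index in \eqref{BE_eq_Aaa} should be read as $\KFO[x+u]$ with the summation variable $u$. We use the defining identities $\Cnet{t}{t}=\Aaa[x+t]/\aee[x+t]$ and $\Cnet{t}{s}=-\CRS{t}{s}/\aee[x+t]$ as the operative definitions and treat the displayed series merely as their expansions. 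Finiteness of $\aee[x+t]>0$ is implicitly assumed, as in Definition \ref{BE_def_prenet}.
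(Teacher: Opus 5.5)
Your induction is correct and is essentially the paper's own argument: given both decompositions at time $t$, the recursion \eqref{BE_eq_RS} gives the form of $\RS{t+1}{x}$ with the coefficients \eqref{BE_eq_CRS}, and \eqref{BE_eq_prenet} then gives that of $\prenet[t+1]{x}$ with the coefficients \eqref{BE_eq_Cnet}. Your side remark on the displayed series is accurate and more careful than the paper, which only says they follow by plugging in \eqref{BE_eq_aee} and \eqref{BE_eq_Aaa}: the survival products should run over $\pFO[x+t+v]$, and the benefit in \eqref{BE_eq_Aaa} should carry the summation index rather than the bound product index.
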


\begin{proof}
    We prove \eqref{BE_eq_net-decomp} and the formulas for the coefficients by induction. 
    They clearly hold for $t=0$.
    Suppose they hold for some fixed but arbitrary $t\in\Nz$.
    Then \eqref{BE_eq_RS} immediately yields \eqref{BE_eq_net-decomp} for $\RS{t+1}{x}$ with the coefficients given by \eqref{BE_eq_CRS} for $t+1$.
    Using that \eqref{BE_eq_net-decomp} holds for $\RS{t+1}{x}$, we obtain \eqref{BE_eq_net-decomp} for $\prenet[t+1]{x}$ by \eqref{BE_eq_prenet}.
    Plugging the formulas for $\Aaa[x+t]$ from \eqref{BE_eq_Aaa} and for $\aee[x+t]$ from \eqref{BE_eq_aee} into \eqref{BE_eq_prenet} further yields  \eqref{BE_eq_Cnet} for the coefficients.
    This establishes \eqref{BE_eq_net-decomp} and the formulas for the coefficients for $t+1$. 
    By induction, these hence hold for all $t\in\Nz$.
\end{proof}

This decomposition of the net premium implies a similar decomposition for the gross premiums $(\pregross[t]{x})_t$ as defined in \eqref{BE_eq_pregross} and hence for the overall estimated cash flow of the insurance policy at time $t\in\Nz$ given by
\begin{align}
    \CFLOW[t]          &=  \pp{\pregross[t]{x} - \Imed[t]\KSO[x+t] - \Icost[t]\cSO} \prod_{s=0}^{t-1}(1-\pSO[x+s])\tag{\ref{BE_eq_cfgross}}
\intertext{with}
    \pregross[t]{x} &= \pp{\prenet[t]{x} + \Icost[t]\cFO} \frac{1}{1-\marge}. \tag{\ref{BE_eq_pregross}}
\end{align}

\begin{corollary}[Decomposition of the gross premiums and cash flow]\label{BE_cor_decomp-cf}
For the annual net premiums as in Definition \ref{BE_def_prenet}, the gross premiums $(\pregross[t]{x})_t$ as defined in \eqref{BE_eq_pregross} satisfy
\begin{align*}
    \pregross[t]{x} = \sum_{s=0}^t \frac{\Cnet{t}{s}}{1-\marge}\Imed[s]  +  \frac{\cFO}{1-\marge} \Icost[t]
\end{align*}
with $(\Cnet{t}{s})_{s\leq t}$ as in Proposition \ref{BE_prop_decomp-net}.
The overall cash flow from \eqref{BE_eq_cfgross} at time $t\in\Nz$ is hence given by
\begin{align*}
    \CFLOW[t] = \sum_{s=0}^t \Cgross{t}{s} \Imed[s] + \pp{\prod_{s=0}^{t-1}(1-\pSO[x+s])}\pp{\frac{\cFO}{1-\marge}-\cSO}\Icost[t]
\end{align*}
with
\begin{align}
    \Cgross{t}{s}   &\coloneqq      \left\lbrace
                                    \begin{aligned}
                                        &   \pp{\prod_{s=0}^{t-1}(1-\pSO[x+s])}\pp{\frac{\Cnet{t}{t}}{1-\marge} - \KSO[x+t]} &\quad  &\text{for $s=t$},\\
                                        &   \pp{\prod_{s=0}^{t-1}(1-\pSO[x+s])}\frac{\Cnet{t}{s}}{1-\marge}                  &\quad  &\text{else for $s<t$}.
                                    \end{aligned}
                                    \right.\label{BE_eq_Cgross}
\end{align}                                   
\end{corollary}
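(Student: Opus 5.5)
This is a direct substitution argument built on Proposition~\ref{BE_prop_decomp-net}, so no induction or stochastic input is needed.

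\textbf{Gross premiums.} First I would insert the decomposition $\prenet[t]{x} = \sum_{s=0}^t \Cnet{t}{s}\Imed[s]$ from \eqref{BE_eq_net-decomp} into the definition \eqref{BE_eq_pregross} of the gross premium. Distributing the factor $\frac{1}{1-\marge}$ over the sum and over the cost term $\Icost[t]\cFO$ gives the stated formula for $\pregross[t]{x}$ immediately. This uses only linearity, and $\marge\neq 1$ is implicit since $\frac{1}{1-\marge}$ already appears in the definition.

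\textbf{Cash flow.} Next I would insert this expression into \eqref{BE_eq_cfgross}. Writing $S_t \coloneqq \prod_{u=0}^{t-1}(1-\pSO[x+u])$ for the survival factor, the cash flow becomes
\begin{equation*}
    \CFLOW[t] = S_t\Bigl(\sum_{s=0}^t \frac{\Cnet{t}{s}}{1-\marge}\Imed[s] + \frac{\cFO}{1-\marge}\Icost[t] - \Imed[t]\KSO[x+t] - \Icost[t]\cSO\Bigr).
\end{equation*}
I would then group the terms by index value.
\begin{itemize}
    \item The two $\Icost[t]$ terms combine to $S_t\bigl(\frac{\cFO}{1-\marge}-\cSO\bigr)\Icost[t]$.
    \item The benefit term $-S_t\KSO[x+t]\Imed[t]$ joins the $s=t$ summand. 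Its coefficient is $S_t\bigl(\frac{\Cnet{t}{t}}{1-\marge}-\KSO[x+t]\bigr)$, which matches $\Cgross{t}{t}$ in \eqref{BE_eq_Cgross}.
    \item For $s<t$ the coefficient of $\Imed[s]$ is $S_t\frac{\Cnet{t}{s}}{1-\marge}$, which matches the second case of \eqref{BE_eq_Cgross}.
\end{itemize}

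\textbf{Main obstacle.} The only real difficulty is bookkeeping. In \eqref{BE_eq_Cgross} the product index $s$ is reused even though $s$ is also the summation index, so I would rename the product variable (as with $u$ in $S_t$ above) to avoid the clash. I would also check the edge case $t=0$: there $S_0=1$ (empty product) and only the $s=t$ case of \eqref{BE_eq_Cgross} occurs, which is consistent with the formula.
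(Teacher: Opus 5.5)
Your proposal is correct and is essentially the paper's argument: the paper gives no separate proof and simply notes that the net-premium decomposition of Proposition~\ref{BE_prop_decomp-net}, substituted into \eqref{BE_eq_pregross} and then into \eqref{BE_eq_cfgross}, yields the result, which is exactly your substitute-and-regroup computation. Your renaming of the product index that clashes with the summation index in \eqref{BE_eq_Cgross}, and your check of the case $t=0$, are useful clarifications.
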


\begin{corollary}\label{BE_cor_BE}
    Consider a health insurance policy based on 
    \begin{equation*}
        \data \coloneqq \pp{(\KFO[x])_x, (\pFO[x])_x, \cFO, \rcalc, \marge, (\KSO[x])_x, (\pSO[x])_x, \cSO}.
    \end{equation*}
    Let $\Qm$ be an equivalent martingale measure for the given economy with respect to a numeraire $(\BN[t])$ as in Section \ref{s_model}.
    Assuming independence of the estimated health related data from the economy up to the explicitly modeled inflation of health benefits and fixed costs, 
    the Best Estimate $\mathrm{BE}$ at time $t=0$ with respect to $\Qm$ for the cash flow of the health insurance policy closed at time $t=0$ at age $x\in\Nz$ is given by 
    \begin{align}\label{BE_eq_BE-CF}
        \mathrm{BE} = -\sum_{t\geq 0}\pp{\sum_{s=0}^t \Cgross{t}{s} \EV{\frac{\Imed[s]}{\BN[t]}} + \Cfixed{t}\EV{\frac{\Icost[t]}{\BN[t]}}}
    \end{align}
    with $(\Cgross{t}{s})_{s\leq t}= (\Cgross{t}{s}(x,\data))_{s\leq t}$ as in Corollary \ref{BE_cor_decomp-cf} and 
    \begin{align*}
        \Cfixed{t}\coloneqq \pp{\prod_{s=0}^{t-1}(1-\pSO[x+s])}\pp{\frac{\cFO}{1-\marge}-\cSO}
    \end{align*}
    for $t\in\Nz$.
\end{corollary}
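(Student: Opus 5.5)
The plan is to start from the definition of the Best Estimate in Section~\ref{s_model} and then use linearity of expectation. The Best Estimate of the liability is the negative of the no-arbitrage value of the policy cash flow. Since the cash flow occurs at all dates $t\in\Nz$, this means
\begin{equation*}
    \mathrm{BE} = -\sum_{t\geq 0}\EV{\frac{\CFLOW[t]}{\BN[t]}}.
\end{equation*}
Here $\CFLOW[t]$ is the estimated cash flow \eqref{BE_eq_cfgross}. The termination probabilities $\pSO[x+s]$ appear in it as a deterministic survival weight. This is exactly where the independence assumption is used: since the health-related data are independent of the economy apart from the explicitly modelled indices, the decrement can be replaced by its expected survival probability. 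The benefits $\KSO[\cdot]$, the costs $\cSO$ and the first order parameters then enter only as deterministic constants. The only random quantities left are $\Imed[s]$, $\Icost[t]$ and $\BN[t]$.

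Next, Corollary~\ref{BE_cor_decomp-cf} gives $\CFLOW[t]$ as a finite linear combination with deterministic coefficients:
\begin{equation*}
    \CFLOW[t]=\sum_{s=0}^t \Cgross{t}{s}\Imed[s] + \Cfixed{t}\Icost[t],
\end{equation*}
where $\Cfixed{t}$ is precisely the coefficient of $\Icost[t]$ appearing in that corollary. Dividing by $\BN[t]$ and using linearity of $\Ev$, which requires each term $\Imed[s]/\BN[t]$ and $\Icost[t]/\BN[t]$ to be integrable, gives the inner sum in \eqref{BE_eq_BE-CF} for each fixed $t$. By Proposition~\ref{BE_prop_decomp-net}, the coefficients $\Cgross{t}{s}$ depend only on $x$ and $\data$. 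They therefore factor out of the expectation.

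The main obstacle is the infinite sum over $t$: interchanging $\Ev$ with $\sum_{t\geq0}$ must be justified. I would first note that the survival weights $\prod_{s=0}^{t-1}(1-\pSO[x+s])$ vanish beyond a maximal age, as with any mortality table. In that case the sum is finite and linearity suffices. If an unbounded horizon is to be allowed, I would instead assume absolute summability, $\sum_t\sum_{s\le t}|\Cgross{t}{s}|\,\EV{\Imed[s]/\BN[t]}<\infty$ and similarly for the cost terms, and then apply Fubini/dominated convergence. All summands are products of positive processes with real coefficients, so splitting each coefficient into its positive and negative parts makes this work. Finally, I would remark that each term $\EV{\Imed[s]/\BN[t]}$ is the price of the basis instrument that pays $\Imed[s]$ at time $t$, which gives the valuation-portfolio interpretation.
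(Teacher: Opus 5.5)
Your proposal is correct and is essentially the paper's argument. The paper gives no separate proof: it treats the corollary as immediate from substituting the decomposition of Corollary~\ref{BE_cor_decomp-cf} into $\mathrm{BE}=-\sum_{t\geq 0}\EV{\frac{\CFLOW[t]}{\BN[t]}}$ (as in Example~\ref{model_ex_toy}) and pulling the deterministic coefficients out by linearity. Your extra care about integrability and the infinite horizon goes beyond what the paper addresses. Since your starting formula already has the sum over $t$ outside the expectation, the Fubini step is only needed if one instead starts from $-\EV{\sum_{t\geq 0}\frac{\CFLOW[t]}{\BN[t]}}$.
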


\begin{remark}[Best Estimate for running policies]
For the Best Estimate of a policy that started at some point in the past we obtain the same structural decomposition as in \eqref{BE_eq_BE-CF}.
Based on the available data associated to the policy, there are two options to obtain the respective coefficients:
    \begin{enumerate}
        \item   If the data for a contract closed at time $t=t_0<0$ are given in the same way as for a new policy, the coefficients $(\Cgross{t}{s})_{t\geq s\geq t_0}$ and $(\Cfixed{t})_{t\geq t_0}$ can be calculated as before and for $\Imed[s]$ with $s<0$ we can plug in deterministic historical inflation data.
        \item   Alternatively, if the current premium is known, independent of changes to $\data$ in the past, the corresponding technical provisions are uniquely determined. 
                Adding these technical provisions to the first order health benefits at time $t=0$ and then calculating the coefficients as for an otherwise fresh policy yields the correct coefficients.
    \end{enumerate}
\end{remark}

\begin{remark}[Efficiency and application to multiple policies]
    A decomposition of the Best Estimate as in \eqref{BE_eq_BE-CF} has the main advantage that policy data can be processed independently from the inflation scenario in order to obtain $(\Cgross{t}{s})_{t\geq s}$ and $(\Cfixed{t})_t$.
    For example, rather than explicitly computing the individual premiums along sampled scenarios for a Monte-Carlo simulation, it is sufficient to compute or approximate the building blocks
    \begin{equation}\label{BE_eq_bb}
        \EV{\frac{\Imed[s]}{\BN[t]}}     
        \qquad \text{for $t\in\Nz$ and $s\in\Bp{1,\ldots, t-1}$.}
    \end{equation}
    The remaining building blocks 
    \begin{equation*}
        \EV{\frac{1}{\BN[t]}}=\PN(0,t) ,\quad \EV{\frac{\Imed[t]}{\BN[t]}}=\PRmed(0,t),\quad \EV{\frac{\Icost[t]}{\BN[t]}}=\PRcost(0,t) 
       % \qquad \text{for $t\in\Nz$}
    \end{equation*}
    for $t\in\Nz$ correspond to the prices of nominal and the respective real zero-coupon bonds and hence are usually read off from market data up to a potential adjustment for medical versus core inflation.
    
    In particular, for the Best Estimate of a large portfolio of policies we obtain the same structure as in \eqref{BE_eq_BE-CF} by summing up the respective coefficients for all policies.
    This reduces the computational complexity and allows the usage of a much larger sample of economic scenarios to approximate the building blocks in \eqref{BE_eq_bb} by Monte-Carlo simulations.
\end{remark}

\begin{remark}[Construction of the basis financial instruments from standard financial instruments]
    One way to interpret the basis financial instruments priced in \eqref{BE_eq_bb} is as the combination of a real-valued zero-coupon bond with maturity $s$ and a receiver interest rate swap exchanging the nominal interest rates with a fixed zero-interest rate from $s$ to $t$.
    Individually, prices for both of these products should be available based on current market data. 
    However, the pricing is more complicated since the nominal value for the interest rate swap is given by the random value of the inflation index at time $s$, which is the payout of the real-valued zero-coupon bond.
    In general, the price of an interest rate swap is not independent from the value of the inflation index at time $s$.
\end{remark}

\begin{remark}[Limitation of premium adjustments]\label{BE_rem_lim}
It is a common practice among insurance providers to limit the adjustment of premiums if the adjustment would be extraordinarily large both in absolute terms and relative to the core inflation.
On the level of the premium calculation, each limitation is accompanied by a compensation added to the technical provision.

However, we ignore this practice for the main part of this article. 
In particular, it would break the decomposition of cash flows as in Corollary \ref{BE_cor_decomp-cf} and would hence again require the individual tracking of policies along inflation scenarios.
Nevertheless, this decomposition encodes key properties of the underlying policy portfolio and the stochastic model.
See Section \ref{s_con} for further comments on the impact of limitations.
\end{remark}

\section{Discussion and Conclusion}\label{s_con}
We have shown that it is necessary to choose and apply a suitable stochastic interest rate model in order to reasonably evaluate the Best Estimate for cash flow structures found in health insurance policies.
We further observed that the cash flow associated to basic health insurance policies in a given year can be written as a linear combination of past to present values of an inflation index.
This decomposition allows to obtain the Best Estimate by two separate calculation steps: 
A deterministic computation of the coefficients based on a given portfolio of policies and the application of a stochastic model to determine arbitrage-free prices of the basis financial instruments given by delayed payouts corresponding to the value of an inflation index before the actual maturity.
Due to this decomposition, the cost of a Monte-Carlo simulation for the Best Estimate is essentially independent of the policy portfolio size.
In addition, the structure of different policy portfolios can be easily compared in terms of the coefficients.
Similarly, the impact of different stochastic models on the Best Estimate of policy portfolios can be evaluated by comparing the prices for the basis financial instruments.

There are two additional aspects that we want to discuss:
First, we provide additional comments with respect to the industry practice of limiting premium adjustments, which was already mentioned in Remark \ref{BE_rem_lim}.
Second, we discuss an alternative interpretation of the equivalence principle which would lead to premium adjustments exactly according to the inflation rate and hence would make the application of a stochastic model obsolete. 

\subsection{The common practice of limiting premium adjustments}
It is common industry practice to limit premium adjustments to prevent extraordinarily steep increases.
With respect to the equivalence principle, limiting the premium adjustment has to be accompanied by increasing the technical provisions out of the insurance undertaking's funds to compensate for the resulting gap between the expected payouts and premium payments in the future.
While in practice limitations depend on annual management decisions, within the Best Estimate models often an algebraic rule based on simulated surplus is implemented, which for example limits the adjustment if the rate would exceed both an absolute value and a fixed multiple of the core inflation rate when the undertaking's economic situation allows for it.
These kinds of caps introduce a nonlinearity with respect to the values of the inflation indices which breaks the decomposition from Corollary \ref{BE_cor_decomp-cf} and requires that each policy is tracked individually along inflation scenarios.

One reason for these limitations given by insurance undertakings is adverse selection.
They fear that healthy policyholders who on average require less payouts are more likely to surrender the policy in the case of unexpectedly steep premium adjustments, increasing the expected payouts per remaining policyholder.
Note that the impact of steep premium adjustments on surrenders would have to be quite strong for the limitations to be worth it:
Increasing the premiums by one percentage point less than dictated by the equivalence principle -- and thus also decreasing the future inflow of premiums by roughly one percent -- would be beneficial for the insurance undertaking only if for example otherwise at least roughly one percent of policyholders surrender who are perfectly healthy in the sense that they will never require any payouts -- which would also decrease the future inflow of premiums without changing the amount of payouts.

If dynamic termination behavior with this kind of strong adverse selection is included in the model, then the Best Estimate could technically be lower with limitations than without.
In such an environment, the premium adjustment corresponds to an option -- the undertaking can choose an adjustment below the value prescribed by the inflation rate, which it does if it expects the effects of adverse selection to outweigh the full increase.

Outside of this case, limitations strictly increase the Best Estimate. 
Thus, the Best Estimate without limitations, which is obtainable via our decomposition, essentially provides a lower bound for the Best Estimate with limitations.

\subsection{Treating the technical interest rate assumption as a real rate}
Note that the technical interest rate assumption $\rcalc$ from Section \ref{s_BE} is treated both as a nominal and a real interest rate, depending on the context. 
In the equivalence principle as applied in Section \ref{s_BE}, future cash flows are always considered assuming zero inflation, that is, in real terms, see for example \eqref{BE_eq_aee} and \eqref{BE_eq_Aaa}.
From this perspective, the technical interest rate assumption is applied as a real interest rate relative to the medical inflation. 
However, once inflation has been observed from a year $t$ to $t+1$, the expected payouts are adjusted from $\pp{\Imed[t]\KFO[x+s]}_{s\geq t}$ to $\pp{\Imed[t+1]\KFO[x+s]}_{s\geq t+1}$ but the technical interest rate assumption acts as a nominal interest rate on the technical provisions:
\begin{equation}
    \RS{t+1}{x_0}    \coloneqq  \pp{\RS{t}{x_0}+\prenet[t]{x_0}-\Imed[t]\KFO[x_0+t]}\frac{1+\rcalc}{1-\pFO[x_0+t]}.
    \tag{\ref{BE_eq_RS}}
\end{equation}
As an alternative to the industry convention, the technical interest rate assumption could consistently be treated as a real interest rate by instead computing the technical provisions as
\begin{equation*}
    \RSr{t+1}{x_0}    \coloneqq  \pp{\RSr{t}{x_0}+\prenet[t]{x_0}-\Imed[t]\KFO[x_0+t]}\frac{1+\rcalc}{1-\pFO[x_0+t]}\frac{\Imed[t+1]}{\Imed[t]}.
\end{equation*}
In this case, it is straightforward to see that the net premiums from Definition \ref{BE_def_prenet} with $(\RS{t}{x_0})_t$ replaced by $(\RSr{t}{x_0})_t$ would then be given by 
\begin{equation*}
    \prenetr[t]{x_0} = \Imed[t]\prenet[0]{x_0}.
\end{equation*}
This would be a major simplification of the cash flow structure in health insurance with significant consequences:
\begin{itemize}[leftmargin=20pt]
    \item The gross premiums would be adjusted exactly according to the (medical and cost) inflation rate, which is more transparent for the consumer and makes limitations of premium adjustments unnecessary.
    \item For fixed policy data, no stochastic modeling would be necessary since the Best Estimate is uniquely determined by the current interest rate curves.
    \item Since the premium adjustments are less steep, this would lead to higher premiums at the start and lower premiums towards the end of a policy relative to the current method, assuming that the difference between the current `nominal' and the new `real' technical interest rate assumptions corresponds to the expected inflation, see Figure \ref{fig1} for an illustration.
    This structural change would imply the following:
    \begin{itemize}
        \item With respect to the first order termination probabilities, the Best Estimate decreases after this change -- as long as the actual interest rate is higher than the technical interest rate assumption, because later cash flows are discounted heavier than earlier cash flows.
        \item The Best Estimate would be more sensitive to interest rate changes since there is a larger mismatch between the shapes of premiums and payouts.
        \item Early surrenders and deaths would be even worse for policyholders.
    \end{itemize}
\end{itemize}
In summary, this change would simplify the premium adjustments and determination of the Best Estimate calculation, most likely lower the Best Estimate but introduce more volatility and disadvantage demographics with lower life expectancy.

\begin{figure}
    \centering
    \includegraphics[width=0.45\textwidth]{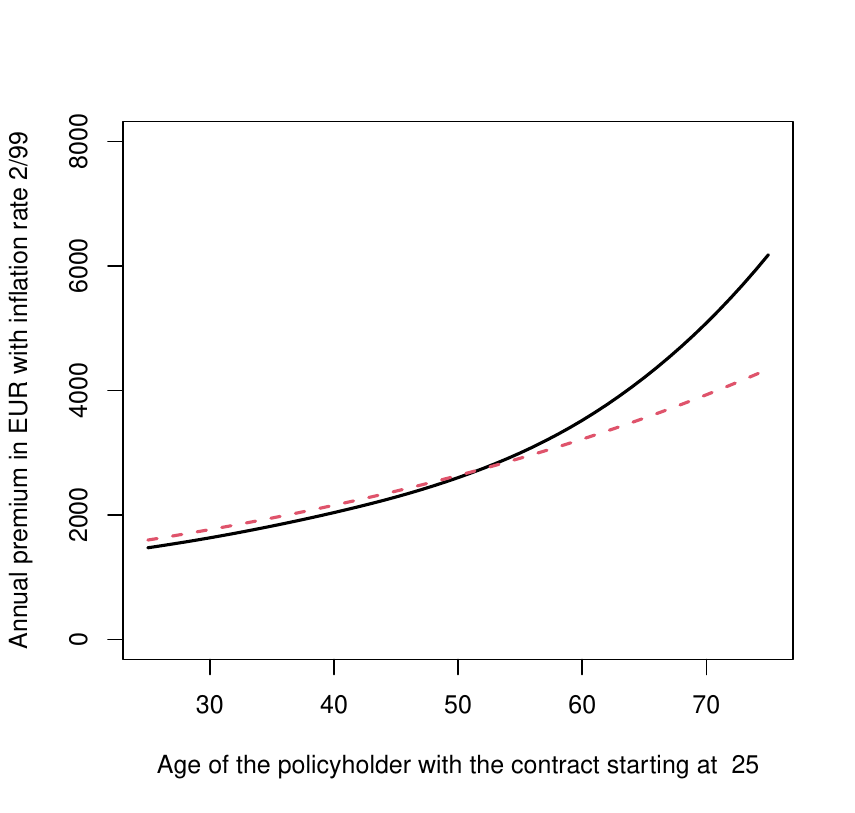}
    $\quad$
    \includegraphics[width=0.45\textwidth]{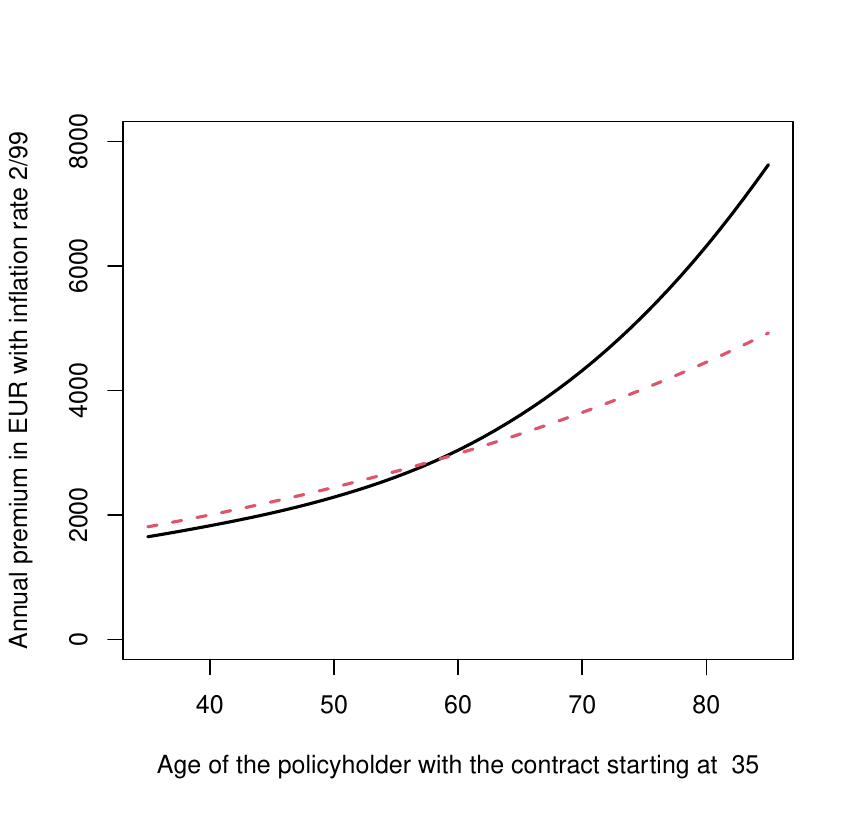}
    \caption{A comparison of the premium development with a nominal versus a real technical interest rate assumption for an inpatient tariff. 
    Based on typical first order health benefits $(\KFO[x])_x$ for an inpatient tariff and assumptions $(\pFO[x])_x$ for the combined probability of death or surrender, the premiums are calculated  with a technical interest rate assumption $\rcalc=1\%$ treated as a nominal rate (depicted in black) and with $\rcalc=-1\%$ treated as a real rate (depicted in red) in a scenario with annual inflation rate $\frac{101}{99}-1 \approx 2\%$. 
    The difference between the initial premiums is less than ten percent. 
    The resulting cash flows have the same present value with respect to the nominal interest rate of $1\%$ and the first order death and surrender assumptions.
    Left: The premium development for a contract starting at age $x_0=25$.
    Right: The premium development for a contract starting at age $x_0=35$.
    }
    \label{fig1}
\end{figure}

\section*{Declarations}
\begin{itemize}
\item Funding: No funds, grants, or other support was received.
\item Competing interests: The authors have no relevant financial or non-financial interests to disclose.
\end{itemize}

%%%%%%%%%%%%%%%%%%%%%%%%%%%%%%%%%%%%%%%%%%%%%%%%%%%%%%%%%%%%%
%%                  The Bibliography                       %%
%%                                                         %%
%%  Use \cite{...} to cite references in text.             %%
%%%%%%%%%%%%%%%%%%%%%%%%%%%%%%%%%%%%%%%%%%%%%%%%%%%%%%%%%%%%%

\printbibliography[title={References}]

@article{DH19,
  author  = {Dhaene, Jan and Hanbali, Hedia},
  title   = {Measuring medical inflation for health insurance portfolios in Belgium},
  journal = {European Actuarial Journal},
  year    = {2019},
  volume  = {9},
  pages   = {139--153},
  doi     = {10.1007/s13385-019-00200-6},
  url     = {https://doi.org/10.1007/s13385-019-00200-6}
}

@article{MB16,
  author    = {Matthias, Lisa and Balleer, Martin},
  title     = {Zillmerung und Rückkaufswert -- ein kritischer Blick auf die rechtliche Bewertung},
  journal   = {Zeitschrift für die gesamte Versicherungswissenschaft},
  volume    = {105},
  number    = {1},
  pages     = {37--50},
  year      = {2016},
  doi       = {10.1007/s12297-016-0331-4},
  url       = {https://doi.org/10.1007/s12297-016-0331-4}
}

@article{DGADH17, 
title={Lifelong Health Insurance Covers With Surrender Values: Updating Mechanisms In The Presence Of Medical Inflation}, volume={47}, DOI={10.1017/asb.2017.13}, number={3}, journal={ASTIN Bulletin}, 
author={Dhaene, Jan and Godecharle, Els and Antonio, Katrien and Denuit, Michel and Hanbali, Hamza}, year={2017}, pages={803–836}
}

@article{HCDDT19,
title = {Once covered, forever covered: The actuarial challenges of the Belgian private health insurance system},
journal = {Health Policy},
volume = {123},
number = {10},
pages = {970-975},
year = {2019},
issn = {0168-8510},
doi = {https://doi.org/10.1016/j.healthpol.2019.07.005},
url = {https://www.sciencedirect.com/science/article/pii/S0168851019301666},
author = {Hamza Hanbali and Hubert Claassens and Michel Denuit and Jan Dhaene and Julien Trufin}
}

@article{DDHLT17,
  author    = {Denuit, Michel and Dhaene, Jan and Hanbali, Hamza and Lucas, Nathalie and Trufin, Julien},
  title     = {Updating mechanism for lifelong insurance contracts subject to medical inflation},
  journal   = {European Actuarial Journal},
  volume    = {7},
  number    = {1},
  pages     = {133--163},
  year      = {2017},
  doi       = {10.1007/s13385-016-0142-y},
  url       = {https://doi.org/10.1007/s13385-016-0142-y}
}

@book{BriMe06,
  title = {{I}nterest {R}ate {M}odels -- {T}heory and {P}ractice},
  author = {Damiano Brigo and Fabio Mercurio},
  address		= {Berlin Heidelberg New York},
  ISBN = {9783540221494},
  DOI = {10.1007/978-3-540-34604-3},
  publisher = {Springer Berlin, Heidelberg},
  year = {2006}
}

@article{EKQ95,
    author = {El Karoui, Nicole and Quenez, Marie-Claire},
    title = {Dynamic Programming and Pricing of Contingent Claims in an Incomplete Market},
    journal = {SIAM Journal on Control and Optimization},
    volume = {33},
    number = {1},
    pages = {29-66},
    year = {1995},
    doi = {10.1137/S0363012992232579},
    URL = { https://doi.org/10.1137/S0363012992232579}
}

@article{KK96,
 ISSN = {10505164},
 URL = {http://www.jstor.org/stable/2245175},
 author = {Karatzas, Ioannis and S. G. Kou},
 journal = {The Annals of Applied Probability},
 number = {2},
 pages = {321--369},
 publisher = {Institute of Mathematical Statistics},
 title = {On the Pricing of Contingent Claims under Constraints},
 urldate = {2026-03-25},
 volume = {6},
 year = {1996},
 doi = {10.1214/aoap/1034968135}
}

@book{WM13,
    author = {W\"uthrich, Mario V. and Merz, Michael},
    title = {Financial {M}odeling, {A}ctuarial {V}aluation and {S}olvency in {I}nsurance},
    publisher = {Springer},
    address		= {Berlin Heidelberg},
    year = {2013},
    doi = {10.1007/978-3-642-31392-9},
    ISBN = {9783642313912}
}

@misc{CIA24,
  author       = {{Canadian Institute of Actuaries}},
  title        = {{IFRS 17} Market Consistent Valuation of Financial Guarantees for Life and Health Insurance Contracts},
  year         = {2024},
  howpublished = {Educational Note},
  URL = {https://www.cia-ica.ca/publications/224128e/},
    urldate = {2026-04-13},
}

@misc{EIOPA22,
  author       = {{EIOPA}},
  title        = {Revised Guidelines on Valuation of Technical Provisions},
  year         = {2022},
  howpublished = {{EIOPA-BoS-22/217}},
  URL = {https://www.eiopa.europa.eu/publications/revised-guidelines-valuation-technical-provisions_en},
    urldate = {2026-04-15},
}

@article{Pio25,
    author = {Piontkowski, Jens},
    title = {Pricing {G}erman health insurance products with only few insured persons},
    journal = {European Actuarial Journal},
    year = {2025},
    volume = {15},
    number = {3},
    pages = {831-857},
    doi = {10.1007/s13385-025-00427-6}
}

@article{CDLS18, 
    title={Projection models for health expenses}, 
    volume={12}, 
    DOI={10.1017/S1748499517000240}, 
    number={1}, 
    journal={Annals of Actuarial Science}, 
    author={Christiansen, Marcus and Denuit, Michel and Lucas, Nathalie and Schmidt, Jan-Philipp}, 
    year={2018}, 
    pages={185–203}
}

@article{Pio20, 
    title={Forecasting health expenses using a functional data model},
    volume={14},
    DOI={10.1017/S1748499519000046},
    number={1},
    journal={Annals of Actuarial Science},
    author={Piontkowski, Jens},
    year={2020},
    pages={72–82}
}

@article{DGZ18,
    author = {Dunn, Abe and Grosse, Scott D. and Zuvekas, Samuel H.},
    title = {Adjusting Health Expenditures for Inflation: {A} Review of Measures for Health Services Research in the {U}nited {S}tates},
    journal = {Health Services Research},
    year = {2018},
    volume = {53},
    number = {1},
    doi = {10.1111/1475-6773.12612}
}

@article{Ch12,
    author = {Christiansen, Marcus C.},
    title = {Multistate models in health insurance},
    journal = {AStA Advances in Statistical Analysis},
    year = {2012},
    volume = {96},
    number = {2},
    doi = {10.1007/s10182-012-0189-2}
}

@mastersthesis{Hel07,
    author = {Helwich, Marko},
    title={Durational effects and non-smooth semi-Markov models in life insurance},
    year={2007},
    school = {Universit\"at Rostock},
    url={http://rosdok.uni-rostock.de/resolve?urn=urn:nbn:de:gbv:28-diss2008-0056-4}
}

@mastersthesis{Jet18,
    author = {Jetses, Julian},
    title={Durational effects and non-smooth semi-Markov models in life insurance},
    year={2018},
    school = {Carl von Ossietzky Universit\"at Oldenburg},
    url={https://uol.de/f/5/inst/mathe/personen/Jetses_2018_Martingalzerlegung_von_Verbindlichkeiten_in_der_Personenversicherung.pdf},
    urldate = {2026-04-14},
}

@book{bohn80,
  title={Die {M}athematik der deutschen privaten {K}rankenversicherung},
  author={Bohn, Klaus},
  address = {Karlsruhe},
  year={1980},
  publisher={Verlag Versicherungswirtschaft}
}
\end{document}